\documentclass{ifacconf}

\usepackage{graphicx}      % include this line if your document contains figures
\usepackage{natbib}        % required for bibliography       % for matrix
\usepackage{mathtools}
\usepackage{amssymb}
\usepackage{algorithm}
\usepackage{algpseudocode}
\usepackage{booktabs}
\usepackage{tabularx}
\usepackage{url}
\usepackage{xcolor}

\begin{document}
\begin{frontmatter}

\title{Knapsack-based Online Sensor Selection for Vehicle State Estimation\thanksref{footnoteinfo}} 

% Title, preferably not more than 10 words.
\thanks[footnoteinfo]{\copyright~2026 the authors. This work has been accepted to IFAC for publication under a Creative Commons Licence CC-BY-NC-ND. This manuscript is an extended version with additional details.}

\author[First]{Jehyeop Han} 
\author[First]{Minhee Kang} 
\author[Second]{Alessandro Colombo}
\author[Second]{Marcello Farina}
\author[First]{Heejin Ahn} 

\address[First]{School of Electrical Engineering, Korea Advanced Institute of Science and Technology, Daejeon, Korea (e-mail: \{jehyeophan@kaist.ac.kr, ministop, heejin.ahn\}@kaist.ac.kr).}
\address[Second]{Dipartimento di Elettronica e Informazione, Politecnico di Milano, Milano, Italy, (e-mail: \{alessandro.colombo, marcello.farina\}@polimi.it)}

\begin{abstract}
As connected and autonomous driving technologies advance, vehicles increasingly rely on data from external sensors. Although this information can enhance state estimation, processing all available streams imposes significant communication and computational costs. To address this challenge, we introduce a Sensor Management Center (SMC) that selects a low-cost subset of external sensors in real time while satisfying chance-constrained error bounds derived from an Extended Kalman Filter (EKF) covariance. We formulate the selection problem as a multidimensional minimum knapsack problem and adopt a deficiency-weighted greedy algorithm as an approximate yet efficient solution. The proposed approach is validated through MATLAB simulations and experiments on a 1:15-scale cooperative driving testbed.
\end{abstract}

\begin{keyword}
Estimation, Connected Vehicles, Intelligent transportation systems, Kalman filtering, Automotive sensors and actuators 
\end{keyword}

\end{frontmatter}
%===============================================================================

\section{Introduction}

%background
With advances in autonomous driving technologies, the development of Vehicle-to-Everything (V2X) communication has enabled autonomous vehicles to exchange information with surrounding agents, such as other vehicles and roadside units (RSUs). These advances expose autonomous vehicles to a large volume of external sensory data.

%problems
Although such information can enhance estimation and perception performance, processing all available data streams introduces substantial communication and computational overhead. In addition, participating RSUs and vehicles have little incentive to share sensor data without compensation. These issues highlight the need for a coordinating entity that manages sensor data sharing among networked RSUs and vehicles while guaranteeing estimation performance and fair compensation for contributing agents.

In this paper, we propose a Sensor Management Center (SMC) framework to coordinate cost-aware sensor data selection in V2X environments. When a vehicle issues a service request, the SMC aggregates external sensor data and their associated costs from participating agents that observe the target vehicle. To satisfy the estimation accuracy required by the requester, the SMC selects a minimum cost sensor set.

%related works 
Existing studies on sensor selection have mainly focused on minimizing estimation error under resource constraints. For offline scheduling, \cite{Vitus2012} proposed a pruning-based method that exploits the monotonicity of the Riccati recursion to remove suboptimal sensor combinations, while \cite{maity2022sensor} formulated the scheduling problem as a convex optimization problem and obtained a suboptimal schedule via covariance tracking. For online selection, \cite{wu2020optimal} studied sensor scheduling to reduce the estimation error under bandwidth constraints. Similarly, \cite{yang2023sensor} minimized the estimation error by selecting sensors under Quality-of-Service (QoS) constraints.

This paper takes the complementary perspective of selecting a minimum cost sensor set while satisfying a prescribed estimation accuracy. To make this problem suitable for real-time implementation, we propose a two-stage sensor selection framework. The first stage reformulates the sensor selection problem as a knapsack-type optimization problem by deriving selection criteria from the covariance information produced by the Extended Kalman Filter (EKF)~\citep{anderson2005optimal}. We then solve the resulting problem using a greedy algorithm~\citep{pisinger2004knapsack}. The second stage computes probabilistic error bounds from the updated covariance matrix after selection. We extend prior cost-minimizing sensor selection work~\citep{ahn2019moving} toward a greedy knapsack-type method with post-selection error-bound certification. We validate its effectiveness and real-time feasibility through MATLAB simulations and hardware experiments on a 1:15-scale cooperative driving testbed.

%=============================================================================================================================
\section{Sensor Management System}\label{section2:motivation}

\begin{figure}[t]
    \centering
    \includegraphics[width=\linewidth]{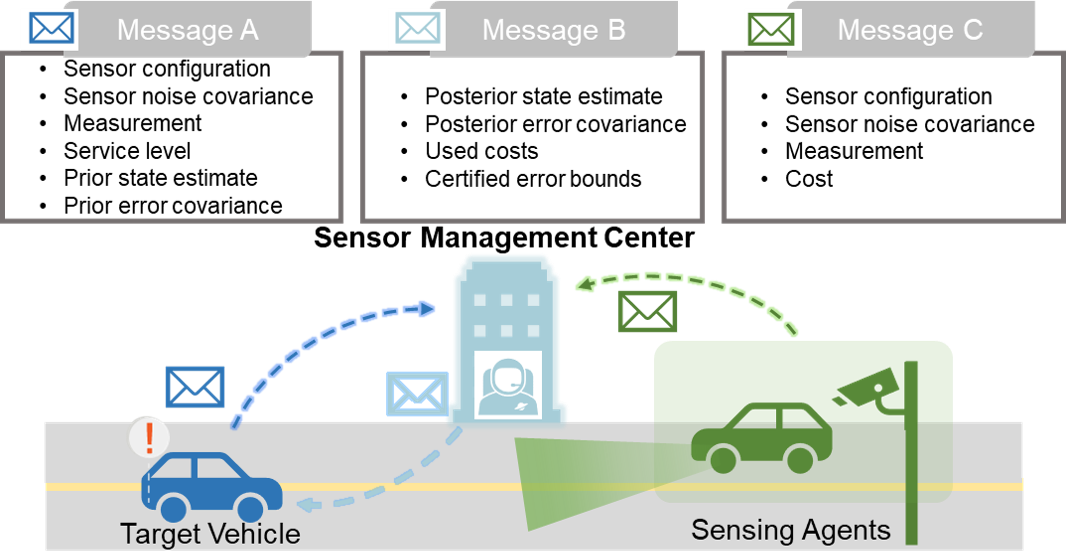}
    \caption{Overview of the Sensor Management Center}
    \label{fig:Framework}
\end{figure}

%===============================================================================
\subsection{Motivation}
Although connected vehicles can handle most driving situations using their onboard sensors, they may require higher estimation accuracy in challenging conditions, such as when sensor noise increases significantly (e.g., due to degraded GPS signals or occluded environments). In such cases, relying solely on onboard sensing may not provide sufficiently reliable state estimates, so vehicles should consider leveraging measurements from external sources, such as roadside sensors or nearby vehicles. 

The Sensor Management Center (SMC) proposed in this paper supports this functionality by collecting external measurements, selecting suitable sensors based on predefined service levels (error bounds) and cost (e.g., sensor usage and compensation costs), and providing the resulting state estimate, as shown in Fig.~\ref{fig:Framework}.
%===============================================================================
\subsection{Sensor Management Center}
The sensor management service provided by the SMC is described in Algorithm~\ref{alg:framework}. When the target vehicle requires improved estimation accuracy, it sends a service level request to the SMC together with its onboard sensor information (i.e., sensor configuration, noise covariance, and measurements) packaged in Message A. Upon receiving the request, the SMC acquires external sensor information and their associated costs, provided in Message C by surrounding agents (e.g., RSUs or nearby vehicles) that are capable of observing the target vehicle.

Using this collected information, the SMC executes the \textsc{Sensor Selection} algorithm (i.e., Algorithm~\ref{alg:selection_procedure} in Section~\ref{section4: solution}) to determine an optimal subset of sensors that satisfies the requested service level while minimizing the total cost. The resulting state estimate and the associated total cost are then transmitted back to the target vehicle via Message B. This estimation service is performed iteratively until the target vehicle terminates the service.

\begin{algorithm}
\caption{Sensor Management Service}
\label{alg:framework}
\begin{algorithmic}[0]
\State \textbf{Start:} A target vehicle \textbf{sends} \emph{Message A} to the SMC.               
\State The SMC \textbf{acknowledges} and \textbf{opens} a session.
\While{Session is open}
  \State Collect \emph{Message C} from sensing agents.
  %\State SMC \textbf{checks} if improved estimation is still needed.
  \State \emph{Message B} = \textsc{Sensor Selection}(\emph{Message A $\&$ C}).
  \State Send \emph{Message B} to the target vehicle.
  \State \textbf{If} vehicle ends service \textbf{then break}.
\EndWhile
\end{algorithmic}
\end{algorithm}

Although a target vehicle may choose to reject the service, for instance, if the total cost exceeds its acceptable threshold, we omit such cases here for simplicity of exposition. Nevertheless, the proposed framework can be readily extended to accommodate such scenarios.

%=============================================================================================================================
\section{Sensor Selection Problem}\label{section3: problem_statement}
In this section, we first describe the target vehicle dynamics, state estimation preliminaries, and service level constraints. We then formulate our sensor selection problem as a multidimensional knapsack optimization.

%===============================================================================
\subsection{Target Vehicle Dynamics} 
We consider a discrete-time nonlinear dynamic model of the target vehicle, expressed as
\begin{equation}
%x_{t+1} = f(x_t, u_t) + B_w w_t, \qquad
x_{t+1} = f(x_t, u_t) + w_t, \qquad
y_t = C_t x_t + v_t,
\label{eq:nonlinear_model}
\end{equation}
where $x_t \in \mathbb{R}^{n_x}, u_t\in \mathbb{R}^{n_u},$ and $y_t\in \mathbb{R}^{n_y}$ denote the system state, control input, and measurement at time step $t$, respectively. The process noise $w_t \sim \mathcal{N}(0, Q_t)$ and measurement noise $v_t \sim \mathcal{N}(0, V_{C_t})$ are assumed to be mutually independent zero-mean Gaussian random variables with uniformly bounded covariance matrices ${Q}_t$ and ${V}_{C_t}$, respectively. %The disturbance matrix $B_w \in \mathbb{R}^{n_x \times n_w}$ 
The measurement matrix $C_t$ represents the concatenation of all sensors selected at time $t$, including the target vehicle’s onboard sensors and, when SMC assistance is active, the external sensors selected by the SMC.

%===============================================================================
\subsection{State Estimation}
Given the measurement vector $y_t$, the SMC estimates the target vehicle state at each time step. Let $\hat{x}_{t|t}$ denote the \emph{posterior} (updated) state estimate at time $t$, and $\hat{x}_{t|t-1}$ denote the \emph{prior} (predicted) state estimate for time $t$. The posterior and prior estimation error covariances are denoted by $P_{t|t}$ and 
$P_{t|t-1}$, respectively.

State estimation is performed using the EKF by linearizing the nonlinear dynamics around the current state estimate. The EKF consists of two main stages: the \textit{prediction} step and the \textit{update} step.
%We use the EKF for state estimation. The EKF handles nonlinear system dynamics by linearizing them around the current state estimate when computing the Kalman gains and covariance updates.
%The EKF consists of two main stages: the \textit{prediction} step and the \textit{update} step.

\textit{Prediction step.}
Based on the system model~\eqref{eq:nonlinear_model}, the prior state estimate and its error covariance are computed as
\begin{align}
    \hat{x}_{t|t-1} &= f(\hat{x}_{t-1|t-1}, u_{t-1}), \label{eq:ekf_pred_state}  \\
    %P_{t|t-1} &= F_{t-1} P_{t-1|t-1} F_{t-1}^\top + B_w Q_{t-1} B_w^\top, \label{eq:ekf_pred_cov}
    P_{t|t-1} &= F_{t-1} P_{t-1|t-1} F_{t-1}^\top + Q_{t-1}, \label{eq:ekf_pred_cov}
\end{align}
where $F_{t-1} = \frac{\partial f}{\partial x} \big|_{\hat{x}_{t-1|t-1}, {u}_{t-1}}$ is the Jacobian of the process model with respect to the state.

\paragraph*{Update step.}
When a new measurement ${y}_{t}$ becomes available, the posterior state estimate is updated as
\begin{align}
%K_{t} &= P_{t|t-1} C_{t}^\top (C_{t} P_{t|t-1} C_{t}^\top + V_{C_{t}})^{-1}, \\
\hat{x}_{t|t} &= \hat{x}_{t|t-1} + K_{t}(y_{t} - C_{t}\hat{x}_{t|t-1}), \label{eq:ekf_x_update}\\
P_{t|t} &= (I - K_{t} C_{t}) P_{t|t-1},
\label{eq:ekf_cov_update}
\end{align}
where $K_{t} = P_{t|t-1} C_{t}^\top (C_{t} P_{t|t-1} C_{t}^\top + V_{C_{t}})^{-1}$ is the Kalman gain.
%where $K_{t}$ is called the Kalman gain.

The covariance update in~\eqref{eq:ekf_cov_update} can also be written in the information form as
\begin{equation}
    P_{t|t}^{-1} = P_{t|t-1}^{-1} + C_t^\top V_{C_t}^{-1} C_t.
    \label{eq:ekf_info_update}
\end{equation}

Through this recursive process, the EKF provides real-time state estimates and covariance updates for the target vehicle based on the available sensor data.

%=============================================================================== 
\begin{assum}[EKF initialization]
\label{asm:ekf_init}
At the initial time $t = 0$, the error covariance 
$P_{0|0}\succ 0$ is known and finite. 
\end{assum}

%\begin{assumption}[Baseline detectability]
\begin{assum}[Onboard sensor detectability]
\label{asm:baseline_detect}
Let $C_t^{\mathrm{on}}$ denote the measurement matrix of the target vehicle's onboard sensors. The linearized pairs $(F_t, C_t^{\mathrm{on}})$ are uniformly detectable for all $t$. See~\cite{anderson1981detectability}.
\end{assum}

%=============================================================================== 
Under Assumptions~\ref{asm:ekf_init} and~\ref{asm:baseline_detect}, standard results for linear time-varying Kalman filters imply that the corresponding error covariances 
$P_{t|t}$ remain uniformly bounded and positive definite (see Corollary~5.4 in~\cite{anderson1981detectability}). 
Because we use the EKF recursion \eqref{eq:ekf_pred_state}--\eqref{eq:ekf_cov_update}, which coincides with the linear time-varying Kalman filter applied to locally linearized models of \eqref{eq:nonlinear_model}, our EKF covariance sequence $P_{t|t}$ remains uniformly bounded and positive definite. The EKF covariances can be interpreted as local approximations of the true error covariances.

%=============================================================================== 
\subsection{Service Level Constraint}
The SMC selects, at each time step, a set of available sensors that yields a state estimate whose error lies within the bound associated with the service level requested by the target vehicle. A finite set of service levels $\mathcal{L}=\{1,\dots,m\}$ is predefined, 
and the requested service level $\ell_t\in\mathcal{L}$ may vary over time.
Each level $\ell_t$ is associated with an error bound $\Omega(\ell_t)\subset \mathbb{R}^{n_x}$. 

Although the estimation error is not truly Gaussian in the nonlinear setting, we rely on the linear approximation previously introduced through EKF and we approximate it as a Gaussian stochastic process with covariance $P_{t|t}$ from EKF.
Since Gaussian estimation errors are inherently unbounded, it is impossible to guarantee deterministic constraints $e_t \in \Omega(\ell_t)$ where $e_t$ is the estimation error. Instead, we enforce the following chance constraint
\begin{equation}
    \Pr(e_t \in \Omega(\ell_t)) \ge p,
    \label{eq:chance_constraint}
\end{equation}
Under the already-discussed Gaussian error approximation,~\eqref{eq:chance_constraint} is fulfilled, according to ~\cite{ahn2019moving}, if
\begin{equation}
    \mathcal{E}(t) \subseteq \Omega(\ell_t),
    \label{eq:set_inclusion}
\end{equation}
where
\begin{equation}
    \mathcal{E}(t) := 
    \{\, e : e^\top P_{t|t}^{-1} e \le \alpha(p)\,\},
    \label{eq:ellipsoid_region}
\end{equation}
and $\alpha(p)$ is obtained from the $\chi^2$ distribution with
$n_x$~degrees of freedom, i.e., 
$\Pr\!\left(\sum_{i=1}^{n_x} Z_i^2 \le \alpha(p)\right)=p$
for $Z_i\sim\mathcal{N}(0,1)$.  %\citep{ahn2019moving}.

Unlike the error bound formulation in \cite{ahn2019moving}, we consider a more specific case in which the error bound $\Omega(\ell_t)$ is represented as an axis-aligned box. In view of this, we derive scalar inequalities implied by the set-inclusion \eqref{eq:set_inclusion} along each axis.  

\setcounter{thm}{0}
\begin{prop}\label{prop:set_inclusion}
    Assume that set $\Omega(\ell_t)$ is defined by
\begin{equation}\label{eq:Omega_rect}
    \Omega(\ell_t) := \{\, e : |h_i^\top e| \le k_i^{(\ell_t)},\; i=1,\dots,n_x \},
\end{equation}
where $\{h_i\}$ denote the canonical basis vectors in $\mathbb{R}^{n_x}$ and $k_i^{(\ell_t)}$ denotes the admissible distance from the origin along the axis. 
If the set-inclusion constraint~\eqref{eq:set_inclusion} holds, then
\begin{equation}
    (P_{t|t}^{-1})_{ii} \;\ge\;
    \frac{\alpha(p)}{\bigl(k_i^{(\ell_t)}\bigr)^2},
    \qquad \forall i=1,\dots,n_x,
    \label{eq:scalar_equiv}
\end{equation}
where $(P_{t|t}^{-1})_{ii}$ is the $i$-th diagonal element of $P_{t|t}^{-1}$.
\end{prop}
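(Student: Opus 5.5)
The plan is to compute the extent of the ellipsoid $\mathcal{E}(t)$ along each coordinate axis and compare it with the half-width $k_i^{(\ell_t)}$ of the box. I would then convert the resulting bound on the diagonal of $P_{t|t}$ into a bound on the diagonal of $P_{t|t}^{-1}$ using an elementary inequality for positive definite matrices. Throughout, write $P:=P_{t|t}$, $\alpha:=\alpha(p)$ and $k_i:=k_i^{(\ell_t)}$. By the discussion following Assumptions~\ref{asm:ekf_init}--\ref{asm:baseline_detect}, $P\succ 0$, so $P^{1/2}$ and $P^{-1/2}$ exist.

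\emph{Step 1 (support function of the ellipsoid).} I would show that
\begin{equation*}
\max_{e\in\mathcal{E}(t)} h_i^\top e=\sqrt{\alpha\, h_i^\top P h_i}=\sqrt{\alpha P_{ii}}.
\end{equation*}
The cleanest route is the substitution $e=P^{1/2}z$, which turns the constraint $e^\top P^{-1}e\le\alpha$ into $\|z\|^2\le\alpha$. The objective becomes $(P^{1/2}h_i)^\top z$, and by Cauchy--Schwarz it is maximized at $z=\sqrt{\alpha}\,P^{1/2}h_i/\|P^{1/2}h_i\|$. The corresponding point is
\begin{equation*}
e^\star=\sqrt{\alpha}\,P h_i/\sqrt{h_i^\top P h_i}\in\mathcal{E}(t).
\end{equation*}

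\emph{Step 2 (use the inclusion).} Since $e^\star\in\mathcal{E}(t)\subseteq\Omega(\ell_t)$ by \eqref{eq:set_inclusion}, the definition \eqref{eq:Omega_rect} gives $|h_i^\top e^\star|\le k_i$. Hence $\sqrt{\alpha P_{ii}}\le k_i$, that is,
\begin{equation*}
P_{ii}\le k_i^2/\alpha \quad\text{for every } i.
\end{equation*}

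\emph{Step 3 (pass to the inverse).} This is the only non-mechanical point, because \eqref{eq:scalar_equiv} concerns $(P^{-1})_{ii}$ rather than $1/P_{ii}$, and these differ in general. The needed fact is
\begin{equation*}
(P^{-1})_{ii}\,P_{ii}\ge 1 \quad\text{for } P\succ 0.
\end{equation*}
It follows from Cauchy--Schwarz applied to $1=h_i^\top h_i=(P^{1/2}h_i)^\top(P^{-1/2}h_i)$, which yields $1\le (h_i^\top P h_i)(h_i^\top P^{-1}h_i)$. Combining this with Step 2 gives
\begin{equation*}
(P^{-1})_{ii}\ge 1/P_{ii}\ge \alpha/k_i^2,
\end{equation*}
which is \eqref{eq:scalar_equiv}.

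The statement is only a necessary condition, since inclusion controls $P_{ii}$, which is stronger than the claimed bound on $(P^{-1})_{ii}$. The proof therefore needs no converse, and I expect no genuine obstacle beyond Step 3.
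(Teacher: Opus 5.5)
Your proof is correct and follows the same route as the paper. The paper cites the known equivalence (from Ahn and Danielson) between $\mathcal{E}(t)\subseteq\Omega(\ell_t)$ and $h_i^\top P_{t|t} h_i \le (k_i^{(\ell_t)})^2/\alpha(p)$, then applies exactly your Cauchy--Schwarz step $1\le (h_i^\top P_{t|t}^{-1}h_i)(h_i^\top P_{t|t}h_i)$; your Steps~1--2 simply derive the needed necessity direction of that cited equivalence directly, via the support function of the ellipsoid, which makes the argument self-contained.
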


\begin{pf}
From \cite{ahn2019moving}, the scalar condition
\begin{equation}
h_i^\top P_{t|t} h_i \le k_i^{(\ell_t)2}/\alpha(p), \qquad i=1,\dots,n_x,
\label{eq:equiv_condition_Ahn}
\end{equation}
is equivalent to the set–inclusion constraint
\eqref{eq:set_inclusion} given~\eqref{eq:ellipsoid_region} and ~\eqref{eq:Omega_rect}. Applying the matrix Cauchy--Schwarz inequality gives $|(P_{t|t}^{-1/2}h_i)^\top (P_{t|t}^{1/2}h_i)|^2  =1\le (h_i^\top P_{t|t}^{-1} h_i) (h_i^\top P_{t|t} h_i)$. If $h_i^\top P_{t|t} h_i\le k_i^{(\ell_t)2}/\alpha(p)$, then we have $h_i^\top P_{t|t}^{-1}h_i \geq \alpha(p)/k_i^{(\ell_t)2}$, which is analogous to~\eqref{eq:scalar_equiv}. ~~~~~~~~~~~~~~\qed
\end{pf}

Although \eqref{eq:scalar_equiv} is only a necessary condition, we use \eqref{eq:scalar_equiv} in our sensor selection problem because the inverse covariance update \eqref{eq:ekf_info_update} explicitly reveals the benefit of adding new sensors. By contrast, the equivalent scalar condition \eqref{eq:equiv_condition_Ahn} is written in terms of the covariance (not the inverse), whose update does not directly expose how individual sensors influence the estimation accuracy. 
If all available sensors still cannot satisfy \eqref{eq:scalar_equiv}, the SMC uses the full sensor set and computes the corresponding rectangular bounds based on the resulting posterior covariance (Section~\ref{section4: solution}).

\setcounter{thm}{0}
\begin{rem}
In addition to the necessary condition provided in Proposition~\ref{prop:set_inclusion}, 
we can also provide a sufficient, although possibly conservative, condition for the 
set-inclusion constraint~\eqref{eq:set_inclusion} by using suitable diagonal consistent 
upper bounds of the covariance matrices. In particular, according to~\cite{farina2016partition}, 
it is possible to define, in a recursive fashion, a diagonal matrix 
$\bar{P}_{t|t}$ such that $\bar{P}_{t|t} \succeq P_{t|t}$. 
Then, as in the previous argument, the condition
\[
(\bar{P}_{t|t}^{-1})_{ii} \;\ge\; \frac{\alpha(p)}{(k_i^{(\ell_t)})^2},
\quad i=1,\dots,n_x,
\]
provides a sufficient condition for the scalar inequality~\eqref{eq:scalar_equiv} 
and hence for the set-inclusion constraint~\eqref{eq:set_inclusion}. 
The conservativeness introduced by this diagonal bounding strategy will be 
further analyzed in future work.
\end{rem}

%=============================================================================== 
\subsection{Knapsack-based Sensor Selection Problem}
Assume that $n_s$ sensors are available at time $t$, each with an associated cost $c_t^{(j)}$. Our objective is to select a subset of these sensors that minimizes total cost while ensuring the estimation error remains within the bound specified by the requested service level $\ell_t$. For tractability, we relax the original constraint \eqref{eq:set_inclusion} by using \eqref{eq:scalar_equiv} as follows.
\begin{subequations}\label{eq:minKP}
\begin{align}
\min_{s_j, \forall j} \quad 
& \sum_{j=1}^{n_s} c_t^{(j)} s_j, \label{eq:minKP_a} \\[4pt]
\text{s.t.} \quad
& \left[ \left[P_{t|t-1}^{-1}\right]_{ii}
  + \sum_{j=1}^{n_s} \left[C_t^{(j)\top} V_{C_{t}^{(j)}}^{-1} C_t^{(j)}\right]_{ii} s_j \right]
  \ge \frac{\alpha(p)}{\bigl(k_i^{(\ell_t)}\bigr)^2}, \label{eq:minKP_b}\\[-0.5ex]
\notag &\text{for } i = 1,2,\dots,n_x, \\[4pt]
& s_j \in \{0,1\}, \quad j = 1,2,\dots,n_s. \label{eq:minKP_c}
\end{align}
\end{subequations}
where $s_j$ is the binary decision variable that takes a unitary value if and only if the $j$th sensor is selected. For the $j$th sensor, $C_t^{(j)}$ denotes its observation matrix and $V_{C_t^{(j)}}$ represents the corresponding measurement noise covariance. The constraint~\eqref{eq:minKP_b} is based on \eqref{eq:scalar_equiv} and the covariance recursion \eqref{eq:ekf_info_update}. The explicit dependence of \eqref{eq:scalar_equiv} on $s_j$ enables its interpretation as a knapsack optimization problem. Each sensor contributes information $[C_t^{(j)\top} V_{C_{t}^{(j)}}^{-1} C_t^{(j)}]_{ii}$ at a cost $c_t^{(j)}$, and the total information must exceed a prescribed threshold. 

To be more specific, our sensor selection problem \eqref{eq:minKP} corresponds to the multidimensional minimum knapsack problem. Indeed, while the classical knapsack problem seeks to maximize the total value of selected items subject to a weight capacity, the minimum knapsack problem is its dual form, aiming to minimize the total weight while achieving a required minimum total value~\citep{csirik1991heuristics}. The multidimensional knapsack problem (d-KP) \citep{pisinger2004knapsack} extends this formulation by incorporating multiple resource constraints. 

%=============================================================================================================================
\section{Solution to the Sensor Selection Problem}\label{section4: solution}
To solve~\eqref{eq:minKP}, we rely on the solution strategies commonly used for the d-KP formulation. The d-KP problem is addressed using exact algorithms, such as Branch-and-Bound and Dynamic Programming, and heuristics, such as greedy strategies and relaxation-based heuristics. Exact methods guarantee optimality but are intractable as the number of sensors increases. Therefore, we adopt a greedy algorithm for d-Min-KP to ensure real-time feasibility.
%=============================================================================== 
\subsection{Greedy Approach for d-Min-KP}
In this work, we adopt a greedy strategy developed for the d-KP problem and adapt it to the d-Min-KP formulation. 
First, note that~\eqref{eq:minKP} can be formulated as
the multidimensional minimum knapsack problem (d-Min-KP):
\begin{equation}
\label{eq:dkp_basic}
\min_{s_j} \sum_{j=1}^{n_s} w_j s_j, 
\quad \text{s.t. } \sum_{j=1}^{n_s} v_{ij} s_j \ge b_i, \; i = 1, \dots, d,
\end{equation}
where $w_j$ is the weight of item $j$, 
$s_j \in \{0,1\}$ indicates whether item $j$ is selected, $v_{ij}$ is the value contribution of item $j$ in the $i$th resource, and $b_i$ is the required value threshold in the $i$th resource. The total number of resource dimensions is $d$. In our setting, the information contribution of each sensor and its associated cost correspond to the item value $v_{ij}$ and item weight $w_j$, respectively, and the required bound maps to $b_i$ in \eqref{eq:dkp_basic}.
Thus, we define the efficiency $e_j$ of each item $j$ and the relative deficiency $r_i$ of each resource $i$ as follows.
\begin{align}
e_j := \frac{\sum_{i=1}^{d} r_i v_{ij}}{w_j}, 
&& 
r_i := \frac{d_i}{b_i^2},
\end{align}
where $d_i:=b_i-\sum_{\{j: s_j=1\}}v_{ij}$ is the remaining information deficiency of resource $i$. Thus, $r_i$ reflects how scarce the remaining information is for resource $i$, and $e_j$ represents how effective item $j$ is at improving the remaining information per unit cost. The greedy algorithm then iteratively selects the item with the highest efficiency, prioritizing those that most effectively reduce the remaining deficiency.

\begin{algorithm}
\caption{\textsc{Sensor Selection}}
\label{alg:selection_procedure}
\begin{algorithmic}[1]
\State \textbf{Input:} 
\State \quad \textit{Message A} $= \big(C_t^{\mathrm{on}},\, V_{C_t^{\mathrm{on}}},\, y_t^{\mathrm{on}},\, \hat{x}_{t|t-1}, \, P_{t|t-1}, \,\ell_t\big)$
\State \quad \textit{Message C} $= \big\{ \big(C_t^{(j)},\, V_{C_t^{(j)}},\, y_t^{(j)},\, c_t^{(j)}\big) \big\}_{j=1}^{n_s}$
\vspace{0.3em}
\State \textbf{Initialization:}
%\State \quad Construct $\mathcal{S}^{\mathrm{all}} = \{ (C_t^{(j)}, V_{C_t^{(j)}}, c_t^{(j)}, y_t^{(j)}) \}_{j=1}^{n_s}$
\State \quad Let $P_{\mathrm{base}} = P_{t|t-1}^{-1} + (C_t^{\mathrm{on}})^\top V_{C_t^{\mathrm{on}}}^{-1} C_t^{\mathrm{on}}$
\State \quad Compute $b_i = \alpha(p)/(k_i^{(\ell_t)})^2- [P_{\mathrm{base}}]_{ii}$ for all $i$
\State \quad Compute $v_{ij} = [(C_t^{(j)})^\top V_{C_t^{(j)}}^{-1} C_t^{(j)}]_{ii}$ for all $i,j$
\State \quad Set $\mathcal{S}_{\textrm{select}} = \emptyset$ and $\mathcal{S}_{\mathrm{temp}} = \{1,\ldots, n_s\}$

\vspace{0.3em}
\State \textbf{Selection: Repeat}
    \State \quad $d_i=b_i-\sum_{j\in \mathcal{S}_{\textrm{select}}} v_{ij}$ and $r_i = d_i / (b_i)^2$ for all $i$
    \State \quad $e_j = \sum_{i=1}^{d} r_i \, v_{ij}/{c_j}$ for all $j \in \mathcal{S}_{\mathrm{temp}}$
    \State \quad$j^* = \arg\max_{j\in\mathcal{S}_{\mathrm{temp}}}  e_j $\label{alg:argmax}
    \State \quad$\mathcal{S}_{\textrm{select}} = \mathcal{S}_{\textrm{select}} \cup \{ j^* \}$, \quad $\mathcal{S}_{\mathrm{temp}} = \mathcal{S}_{\mathrm{temp}} \setminus \{ j^* \}$ \label{alg:S_select}
    \State \quad \textbf{If} all $d_i -v_{ij^*}\le 0$ or $\mathcal{S}_{\mathrm{temp}}=\emptyset$ \textbf{then}
    \State \qquad $\mathcal{S}^* = \mathcal{S}_{\textrm{select}}$; \textbf{break}

\vspace{0.3em}
\State \textbf{Post-processing:}
\State \quad Update $P_{t|t}$ and $\hat{x}_{t|t}$ using \eqref{eq:ekf_cov_update}, \eqref{eq:ekf_x_update} with $\mathcal{S}^*$
\State \quad Compute $c_t^*=\sum_{j\in \mathcal{S}^*} c_t^{(j)}$ and $k_i^* = \sqrt{\alpha(p)\, h_i^\top P_{t|t} h_i}$
\State \textbf{Return:} \textit{Message B}$=(\hat{x}_{t|t}, P_{t|t}, c_t^*, k_i^*~ \forall i)$
\end{algorithmic}
\end{algorithm}

Algorithm~\ref{alg:selection_procedure} summarizes our greedy sensor selection procedure. The algorithm begins by receiving the set of available sensor information from Messages~A and~C, which includes the observation models, covariances, measurements, a prior state estimate and error covariance, and costs of all candidate sensors along with a service level. Using the prior $(P_{t|t-1})^{-1}$ together with the onboard sensors' contribution $(C_t^{\mathrm{on}})^\top V_{C_t^{\mathrm{on}}}^{-1} C_t^{\mathrm{on}}$, we construct the base covariance matrix $P_{\mathrm{base}}$. We then compute the thresholds $b_i$ and the item values $v_{ij}$.

The algorithm then iteratively selects the sensor with the highest efficiency 
$e_j$ and inserts it into $\mathcal{S}_{\mathrm{select}}$. The iteration 
terminates when either all deficiency components satisfy 
$d_i - v_{ij^*} \le 0$ (after adding sensor $j^*$), or no additional sensors 
remain to be added (i.e., $\mathcal{S}_{\mathrm{temp}} = \emptyset$).

Finally, the selected sensor set $\mathcal{S}^*=\mathcal{S}_{\textrm{select}}$ is used to perform the EKF update, resulting in a posteriori covariance $P_{t|t}$ and the state estimate $\hat{x}_{t|t}$. We compute the total consumed cost $c_t^*$ and the rectangular bounds $k_i^*$ using $P_{t|t}$, as defined in Theorem~\ref{thm:posterior_rect}. This post-processing step is necessary because Algorithm~\ref{alg:selection_procedure} does not guarantee that the service level constraint \eqref{eq:set_inclusion} is satisfied, as noted in Proposition~\ref{prop:set_inclusion}. All estimation results are then packaged into Message~B and sent back to the target vehicle.

%=============================================================================== 
\subsection{Main result}
The following theorem provides the main properties of the selected sensor set.

\setcounter{thm}{0}
\begin{thm}[A posteriori rectangular bound]
\label{thm:posterior_rect}
Let $\mathcal{S}^*, P_{t|t}^*$, and $k_i^*, i=1,\ldots,n_x$ denote the selected sensor set, the posterior covariance, and the rectangular bounds obtained in the post-processing step of Algorithm~\ref{alg:selection_procedure} for a requested service level $\ell_t$. 
Define
$d_i^* := b_i - \sum_{j \in \mathcal{S}^*} v_{ij}, i = 1,\dots,n_x$ and $\mathcal{E}^*(t) := \left\{\, e : e^\top (P_{t|t}^*)^{-1} e \le \alpha(p) \,\right\}.$
Then the following statements hold:
\begin{enumerate}
    \item If $d_i^* > 0$ for some $i$, then no sensor subset can satisfy the set-inclusion constraint~\eqref{eq:set_inclusion} for the requested service level $\ell_t$.
    
    \item If $d_i^* \le 0$ for all $i$, then the selected sensor set $\mathcal{S}^*$ satisfies the scalar inequalities~\eqref{eq:minKP_b} but not necessarily the set-inclusion constraint~\eqref{eq:set_inclusion}.
    
    \item For $\Omega^*(t) := \left\{\, e : |h_i^\top e| \le k_i^*,\; i=1,\dots,n_x \,\right\},$
    we have $\mathcal{E}^*(t) \subseteq \Omega^*(t)$, and $\Omega^*(t)$ is the smallest axis-aligned rectangular error bound that contains $\mathcal{E}^*(t)$.
\end{enumerate}
\end{thm}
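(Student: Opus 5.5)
We handle the three statements separately. Throughout, recall from Algorithm~\ref{alg:selection_procedure} that $b_i$ already absorbs the diagonal of $P_{\mathrm{base}}$, that all item values $v_{ij}=[(C_t^{(j)})^\top V_{C_t^{(j)}}^{-1}C_t^{(j)}]_{ii}$ are nonnegative because $V_{C_t^{(j)}}^{-1}\succ 0$, and that the information form \eqref{eq:ekf_info_update} gives, for any subset $\mathcal{S}$,
\[
[(P_{t|t}^{\mathcal{S}})^{-1}]_{ii} = [P_{\mathrm{base}}]_{ii} + \sum_{j\in\mathcal{S}} v_{ij}.
\]

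For statement 1, we first argue that the greedy loop only stops with some $d_i^*>0$ when $\mathcal{S}_{\mathrm{temp}}=\emptyset$. The termination test checks exactly whether all $d_i$ become nonpositive after adding $j^*$. Hence, in this case, $\mathcal{S}^*=\{1,\dots,n_s\}$. By nonnegativity of the $v_{ij}$, every subset $\mathcal{S}$ satisfies $\sum_{j\in\mathcal{S}}v_{ij}\le\sum_{j=1}^{n_s}v_{ij}<b_i$ for that index $i$. The displayed identity then gives $[(P_{t|t}^{\mathcal{S}})^{-1}]_{ii}<\alpha(p)/(k_i^{(\ell_t)})^2$, so \eqref{eq:scalar_equiv} fails. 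Since Proposition~\ref{prop:set_inclusion} shows that \eqref{eq:scalar_equiv} is necessary for \eqref{eq:set_inclusion}, no subset can satisfy the set-inclusion constraint. The only subtle point is this termination bookkeeping: a nonpositive final $d^*$ vector arises either through the first break condition or through exhaustion, and in both cases the conclusion holds.

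For statement 2, $d_i^*\le 0$ rewrites directly as $[P_{\mathrm{base}}]_{ii}+\sum_{j\in\mathcal{S}^*}v_{ij}\ge \alpha(p)/(k_i^{(\ell_t)})^2$ for every $i$. Reading the onboard sensors as always selected, this is \eqref{eq:minKP_b}. The negative part, that \eqref{eq:set_inclusion} need not hold, follows because \eqref{eq:scalar_equiv} is only necessary. We exhibit a $2\times 2$ counterexample: take $P=\begin{bmatrix}1&\rho\\ \rho&1\end{bmatrix}$ with $\rho$ close to $1$ and all $k_i$ equal to $\sqrt{\alpha(p)\cdot 1.5}$. Then $h_i^\top P h_i=1$, so the equivalent condition \eqref{eq:equiv_condition_Ahn} holds. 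Meanwhile $(P^{-1})_{ii}=1/(1-\rho^2)$ is large, so \eqref{eq:scalar_equiv} holds. To obtain the failure we need the reverse situation, namely $h_i^\top P h_i$ large while $(P^{-1})_{ii}$ is also large. We therefore choose $P=\mathrm{diag}$-dominant with strong correlation, $P=\begin{bmatrix}a&\rho\sqrt{a}\\ \rho\sqrt{a}&1\end{bmatrix}$ with $a>k_1^2/\alpha(p)$, and $\rho\to1$ so that $(P^{-1})_{11}=1/(a(1-\rho^2))$ exceeds any threshold. This produces a covariance for which \eqref{eq:scalar_equiv} holds and \eqref{eq:equiv_condition_Ahn}, hence \eqref{eq:set_inclusion}, fails. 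The remaining step is to check that such a posterior is realizable as some $P_{\mathrm{base}}^{-1}$; a suitable prior gives this.

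For statement 3, we use the support function of the ellipsoid, $\max_{e\in\mathcal{E}^*(t)} h_i^\top e=\sqrt{\alpha(p)\,h_i^\top P^*_{t|t}h_i}=k_i^*$. This maximum is attained at $e=\sqrt{\alpha(p)}\,P^*_{t|t}h_i/\sqrt{h_i^\top P^*_{t|t}h_i}$, and Cauchy--Schwarz in the $P^{-1}$ inner product gives the upper bound. By symmetry of the ellipsoid the same bound holds for $-h_i$, so $\mathcal{E}^*(t)\subseteq\Omega^*(t)$. For minimality, any axis-aligned box $\{e:|h_i^\top e|\le k_i\}$ containing $\mathcal{E}^*(t)$ must contain the maximizer above, which forces $k_i\ge k_i^*$ for each $i$. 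Hence $\Omega^*(t)$ is contained in every such box. If boxes need not be centered at the origin, the central symmetry of $\mathcal{E}^*$ still forces each interval to include $[-k_i^*,k_i^*]$.
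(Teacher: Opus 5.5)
Your proof is correct. Its skeleton matches the paper's: the contrapositive of Proposition~\ref{prop:set_inclusion} for item~1, a direct rewriting of $d_i^*\le 0$ for item~2, and the ellipsoid/box equivalence for item~3. The difference is that you supply steps the paper's proof leaves out.

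For item~1, the paper only observes that $\mathcal{S}^*$ violates \eqref{eq:minKP_b} and invokes Proposition~\ref{prop:set_inclusion}. That by itself rules out only $\mathcal{S}^*$, not every subset. Your observation closes this: the break test examines exactly the post-addition deficiencies $d_i-v_{ij^*}$, so a positive $d_i^*$ can only occur once $\mathcal{S}_{\mathrm{temp}}$ is exhausted, giving $\mathcal{S}^*=\{1,\dots,n_s\}$. Combined with $v_{ij}=(C_t^{(j)}h_i)^\top V_{C_t^{(j)}}^{-1}(C_t^{(j)}h_i)\ge 0$, this is what actually justifies the claim for \emph{every} subset.

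For item~2, the paper's appeal to ``only a necessary condition'' does not by itself establish non-sufficiency. Your second $2\times 2$ example does: take $a>k_1^2/\alpha(p)$ and $\rho\to1$, so both $(P^{-1})_{ii}$ blow up while $h_1^\top P h_1=a$ violates \eqref{eq:equiv_condition_Ahn}.

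For item~3, you derive from first principles the equivalence the paper imports from \cite{ahn2019moving}. You use Cauchy--Schwarz in the $P^{-1}$ inner product together with the explicit maximizer, and you also handle non-centered boxes. Otherwise the two arguments coincide.

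Some presentational fixes:
\begin{itemize}
\item Delete the first $2\times2$ example. It satisfies both \eqref{eq:scalar_equiv} and \eqref{eq:equiv_condition_Ahn}, so it is not a counterexample.
\item ``Diag-dominant'' is a misnomer for the second matrix.
\item The closing sentence of your item-1 paragraph about a ``nonpositive final $d^*$ vector'' is garbled and can be dropped.
\item In the realizability remark, you need the counterexample matrix $P$ to equal the posterior $P^*_{t|t}$, not $P_{\mathrm{base}}^{-1}$. For instance, take $n_s=1$ with a weakly informative sensor and set $P_{t|t-1}^{-1}:=P^{-1}-(C_t^{\mathrm{on}})^\top V_{C_t^{\mathrm{on}}}^{-1}C_t^{\mathrm{on}}-(C_t^{(1)})^\top V_{C_t^{(1)}}^{-1}C_t^{(1)}$. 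This is positive definite when those contributions are small. The algorithm then returns $\mathcal{S}^*=\{1\}$ with $d_i^*\le0$ for all $i$ and posterior exactly $P$.
\end{itemize}
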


\begin{pf}
By definition, $d_i^* = \alpha(p)/(k_i^{(\ell_t)})^2 - \bigl([P_{\mathrm{base}}]_{ii} + \sum_{j\in\mathcal{S}^*} v_{ij}\bigr)$.
\begin{enumerate}
    \item If $d_i^* > 0$ for some $i$, then
$[P_{\mathrm{base}}]_{ii} + \sum_{j\in\mathcal{S}^*} v_{ij}
  < \alpha(p)/(k_i^{(\ell_t)})^2$, so the scalar inequality
\eqref{eq:minKP_b} fails for that $i$. By the contrapositive of
Proposition~\ref{prop:set_inclusion}, the set-inclusion constraint
\eqref{eq:set_inclusion} for the requested level $\ell_t$ cannot hold.
\item If $d_i^* \le 0$ for all $i$, then
$[P_{\mathrm{base}}]_{ii} + \sum_{j\in\mathcal{S}^*} v_{ij}
  \ge \alpha(p)/(k_i^{(\ell_t)})^2$ for all $i$, so
\eqref{eq:minKP_b} holds. However, by Proposition~\ref{prop:set_inclusion}, this is only
a necessary condition for \eqref{eq:set_inclusion}.
\item For any error bound $k_i$, the inequality $h_i^\top P_{t|t}^* h_i \leq (k_i)^2 / \alpha(p)~\forall i$ is equivalent to the set inclusion $\mathcal{E}^*(t) \subseteq \{e:|h_i^\top e| \le k_i, \forall i \}$. The equality holds at $k_i=k_i^*$ because in the post-processing step, we set $k_i^* := \sqrt{\alpha(p)\, h_i^\top P_{t|t}^* h_i}$ and thus $h_i^\top P_{t|t}^* h_i = (k_i^*)^2 / \alpha(p), \forall i$. Therefore, $\mathcal{E}^*(t) \subseteq \Omega^*(t)$ and $\Omega^*(t)$ is the smallest rectangular bound.~~~~~~~~~~~~~~~~~~~~~~~~~~~~~~~~~~~~~~~~~~\qed
\end{enumerate}
\end{pf}

%=============================================================================================================================
\section{Experiments}\label{section5: experiments}
We implement the framework in MATLAB R2023a using YALMIP and MOSEK to solve~\eqref{eq:minKP}. Hardware experiments use the 1:15-scale testbed from~\cite{bae2025miniaturetestbedvalidatingmultiagent}, featuring three vehicles and one RSU with a depth camera.

%=============================================================================== 
\subsection{MATLAB Simulation}
A vehicle travels on a road where ten RSUs are measuring its state.  
The SMC uses the nonlinear dynamic model in~\eqref{eq:nonlinear_model} with the vehicle state update function $f(x_t,u_t)$ as:
\begin{align}
    p^x_{t+1} &= p^x_{t} + \nu_t \cos(\psi_t)\Delta t,   &&  p^y_{t+1} = p^y_{t} + \nu_t \sin(\psi_t)\Delta t, \notag\\
    \psi_{t+1} &= \psi_t + \frac{\nu_t}{L}\tan(\delta_t)\Delta t, &&    \nu_{t+1} = \nu_t + a_t \Delta t,\label{eq:CAV_dynamics}
\end{align}

where $p^x_t$ and $p^y_t$ denote the planar position, $\psi_t$ is the yaw angle,
$\nu_t$ is the longitudinal velocity, $a_t$ is the longitudinal acceleration, $\delta_t$ is the steering angle, the wheelbase length $L$ is $3~\mathrm{m}$, and $\Delta t = 0.05~\text{s}$. The state and input of the target vehicle are $x_t = (p^x_t,\, p^y_t,\, \psi_t,\, \nu_t)^\top$ and $u_t = (\delta_t,\, a_t)^\top$, respectively. The process noise $w_t$ is modeled as zero-mean Gaussian with covariance $Q=\mathrm{diag}(0.0031\,\mathrm{m}^2, 0.0031\,\mathrm{m}^2, 0.0001\,\mathrm{rad}^2, 0.0125\,(\mathrm{m/s})^2)$.

For control, the steering angle $\delta_t$ is provided by a pure-pursuit path-following controller~\citep{snider2009automatic}, and the longitudinal command $a_t$ is generated by a proportional integral (PI) controller,
$a_t = K_p (\nu_t^{\mathrm{ref}} - \nu_t) + K_i I_t$, where $\nu_t^{\mathrm{ref}}$ is the reference velocity and $I_t$ is the integrated velocity tracking error with controller gains $K_p=3$ and $K_i=0.1$.

%For estimation, the vehicle is equipped with three onboard sensors:
The vehicle has three zero-cost onboard sensors: a position sensor ($C^1=(1,0,0,0; 0,1,0,0)$), an IMU ($C^2= (0,0,1,0)$), and a velocity encoder ($C^3=(0,0,0,1)$), and their measurement noise covariances are given by $V_{\mathrm{position}}=\mathrm{diag}(3, 3)\,\mathrm{m}^2, V_{\mathrm{IMU}}=0.5\,\mathrm{rad}^2$, and $V_{\mathrm{velocity}}=2\,(\mathrm{m/s})^2$. The measurement matrix, costs, and measurement noise covariances of each roadside sensor are summarized in Table~\ref{tab:rsu_setup}. 
We run simulations for $20$\,s, setting the initial covariance to $P_0 = 0.05I_4$ and $p=0.95$. The plots exclude the initial $0.5$\,s transient period.

\begin{table}[t]
\centering
\caption{RSU configurations in the simulation}
\label{tab:rsu_setup}
\renewcommand{\arraystretch}{1.1}
\setlength{\tabcolsep}{3.5pt} 
\resizebox{\columnwidth}{!}{
\begin{tabular}{c c c c} 
\toprule
\textbf{RSU ($j$)} & \textbf{Meas. Mat. ($C^i$)} & \textbf{Cost ($c_t^j$)} & \textbf{Meas. Cov. ($V_{\mathrm{RSU},j}$)} \\
\midrule
RSU 1 & $C^3$ & 1.01 & $0.005$ \\
RSU 2 & $C^3$ & 1.10 & $0.0051$ \\
RSU 3 & $C^1; C^2$ & 5.01 & $\mathrm{diag}(0.01,0.01,0.01)$ \\
RSU 4 & $C^1; C^2$ & 5.10 & $\mathrm{diag}(0.01,0.01,0.01)$ \\
RSU 5 & $C^1$ & 2.01 & $\mathrm{diag}(0.01,0.01)$ \\
RSU 6 & $C^1$ & 2.10 & $\mathrm{diag}(0.01,0.01)$ \\
RSU 7 & $C^3$ & 1.20 & $0.005$ \\
RSU 8 & $C^1$ & 3.10 & $\mathrm{diag}(0.01,0.011)$ \\
RSU 9 & $C^1$ & 4.01 & $\mathrm{diag}(0.01,0.01)$ \\
RSU 10 & $C^1$ & 4.10 & $\mathrm{diag}(0.011,0.01)$ \\
\bottomrule
\end{tabular}
} 
\end{table}

\begin{figure}[t]
    \centering
    \includegraphics[width=\linewidth]{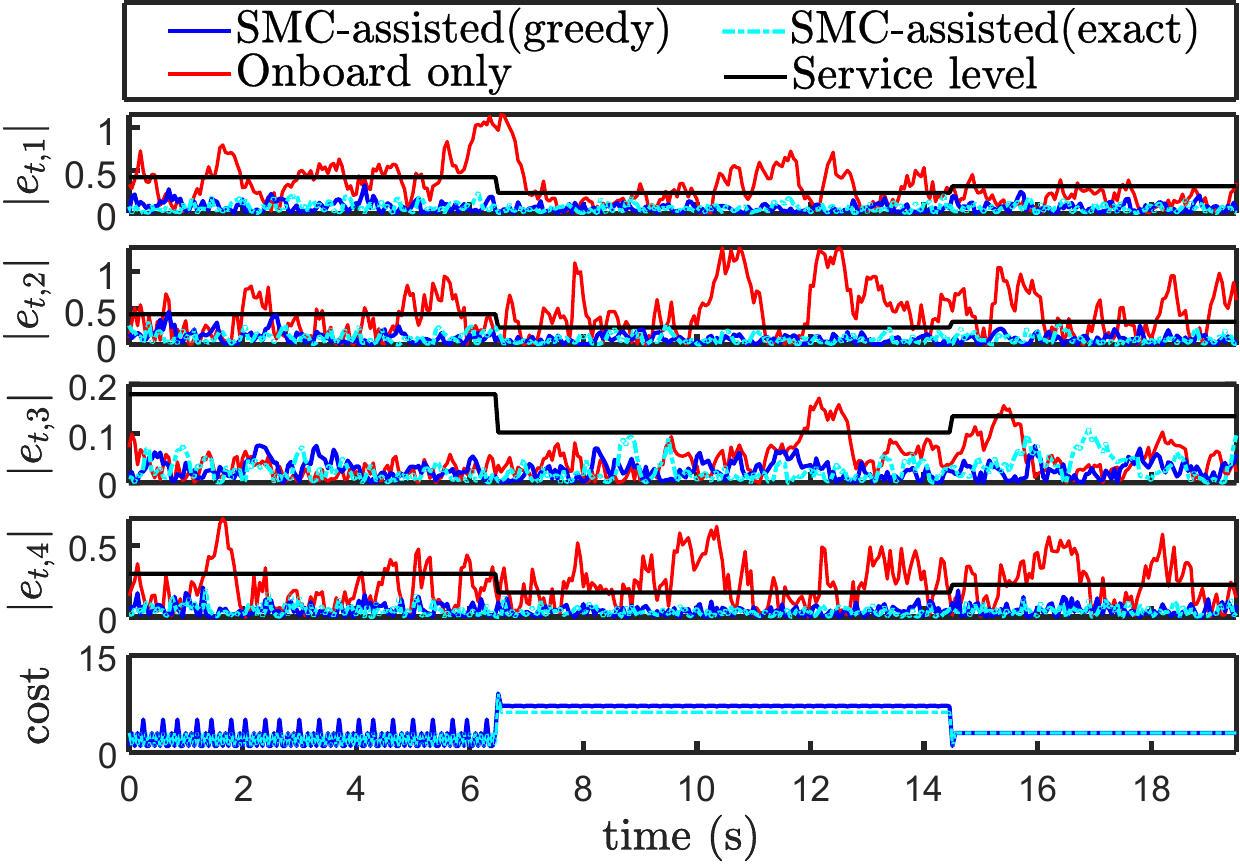}
    \caption{The top four plots show the estimation errors $|e_{t,i}|$ using only onboard sensors (red) and using external sensors with the SMC-assisted (blue and cyan) under three different service levels (black). The bottom plot shows the cost of selected external sensors.}
    \label{fig:Sim_rmse}
\end{figure}

The top four plots in Fig.~\ref{fig:Sim_rmse} show the state estimation errors. 
The red solid, blue solid, and cyan dashed lines represent the results obtained using only onboard sensors, using the greedy selection (Algorithm~\ref{alg:selection_procedure}), and the exact solution of \eqref{eq:minKP}, respectively.

The black solid lines indicate the error bound $k_i^{(\ell_t)}$ corresponding to a requested service level $\ell_t$, where $\ell_t=1$ for $0\leq t< 6.5$, $\ell_t=3$ for $6.5\leq t< 14.5$, and $\ell_t=2$ for $14.5\leq t \leq 19.5$.
Over the entire time horizon, the SMC-assisted cases exhibit significantly improved estimation performance compared to the onboard-only case. In particular, the exact and greedy SMC-assisted solutions select almost the same sensor sets in terms of cost, and their estimation errors are almost indistinguishable. The mean and maximum per-step computation times are $96.2$\,ms and $1675.6$\,ms for the exact solution, and $0.29$\,ms and $4.52$\,ms for the greedy.

\begin{figure}[t]
    \centering
    \includegraphics[width=\linewidth]{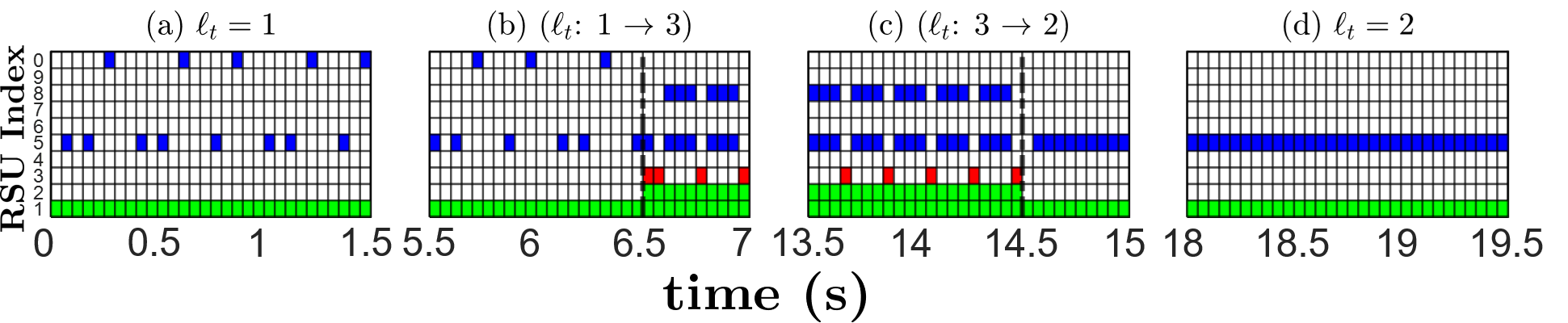}
    \caption{Simulation results. Each row corresponds to one RSU (from RSU~1 at the bottom to RSU~10 at the top). The colors indicate the measurement structure of each RSU: green for $C^3$, blue for $C^{1}$, and red for $(C^{1};C^{2})$.} 
    \label{fig:Selection_result}
\end{figure}

Fig.~\ref{fig:Selection_result} illustrates the RSU sensors selected by the SMC during the simulation. Plots (a) through (d) correspond to portions of the time horizon illustrating a constant service level of $\ell_t=1$, a transition from $\ell_t=1$ to $3$, a transition from $\ell_t=3$ to $2$, and a constant level of $\ell_t=2$, respectively. The black dashed lines in plots (b) and (c) mark the service level transitions.
The SMC adaptively adjusts the sensor selection to satisfy the requested service level while minimizing the cost. Moreover, the selection results align with the cost profile shown in Fig.~\ref{fig:Sim_rmse}: 
achieving higher service levels, as observed after the transition in Fig.~\ref{fig:Selection_result}(b), requires utilizing more sensors and therefore incurs higher cost.

%=============================================================================== 
\subsection{Scaled Testbed Experiment}

\begin{figure}[h]
    \centering
    \includegraphics[width=0.9\linewidth]{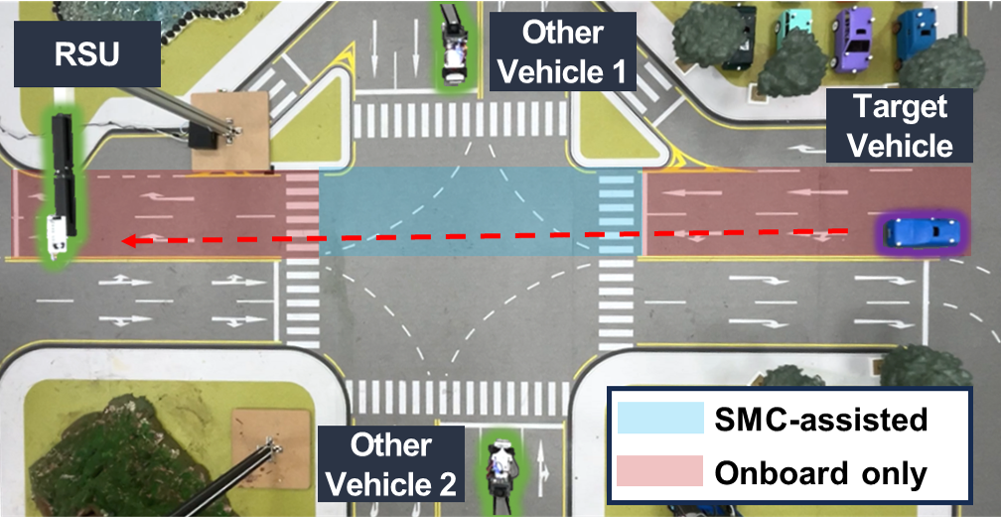}
    \caption{Experiment Scenario. The target vehicle moves along the red-dotted path, and the SMC actively employs the RSU, Other Vehicle 1, and Other Vehicle 2 to satisfy the requested service level.}
    \label{fig:exp_scenario}
\end{figure}

Consider a scenario depicted in Fig.~\ref{fig:exp_scenario}, where a target vehicle experiences a sudden degradation of its position sensor, requiring estimation support from the SMC. In this setup, the external sensing units utilize NVIDIA Jetson Orin NX platforms with RealSense D435 cameras, while the target vehicle operates on a Raspberry Pi4. The SMC runs on a laptop equipped with an Intel Core i7-1165G7 CPU at 2.80 GHz and 16 GB RAM.

The vehicle dynamics follow~\eqref{eq:CAV_dynamics} with $\Delta t = 0.05\,\mathrm{s}$, $L=17\,\mathrm{cm}$, $Q = \mathrm{diag}(0.01\,\mathrm{m}^2,0.01\,\mathrm{m}^2,0.002\,\mathrm{rad}^2 ,0.03\,(\mathrm{m/s})^2)$.

The target vehicle is equipped with an IMU, a wheel encoder, and an OptiTrack motion-capture system. To emulate a noisy onboard GPS, OptiTrack measurements are corrupted with zero-mean Gaussian noise. The measurement covariances are set to $V_{\text{position}} = \text{diag}(0.11^2, 0.11^2)$ m$^2$, $V_{\text{IMU}} = 0.02$ rad$^2$, and $V_{\text{velocity}} = 0.01$ (m/s)$^2$. External sensors estimate position via vision-based depth sensing. The measurement covariance is scaled by $(1+3(1 - \mathrm{conf}))$, where $\mathrm{conf} \in [0,1]$ is the YOLOv8 detection confidence. The base covariances are $0.0012$ for RSU1 and $0.0015$ for Vehicles 1 and 2, with sensing costs of $2$, $1.5$, and $1.501$, respectively.

\begin{figure}[h]
    \centering
    \includegraphics[width=\linewidth]{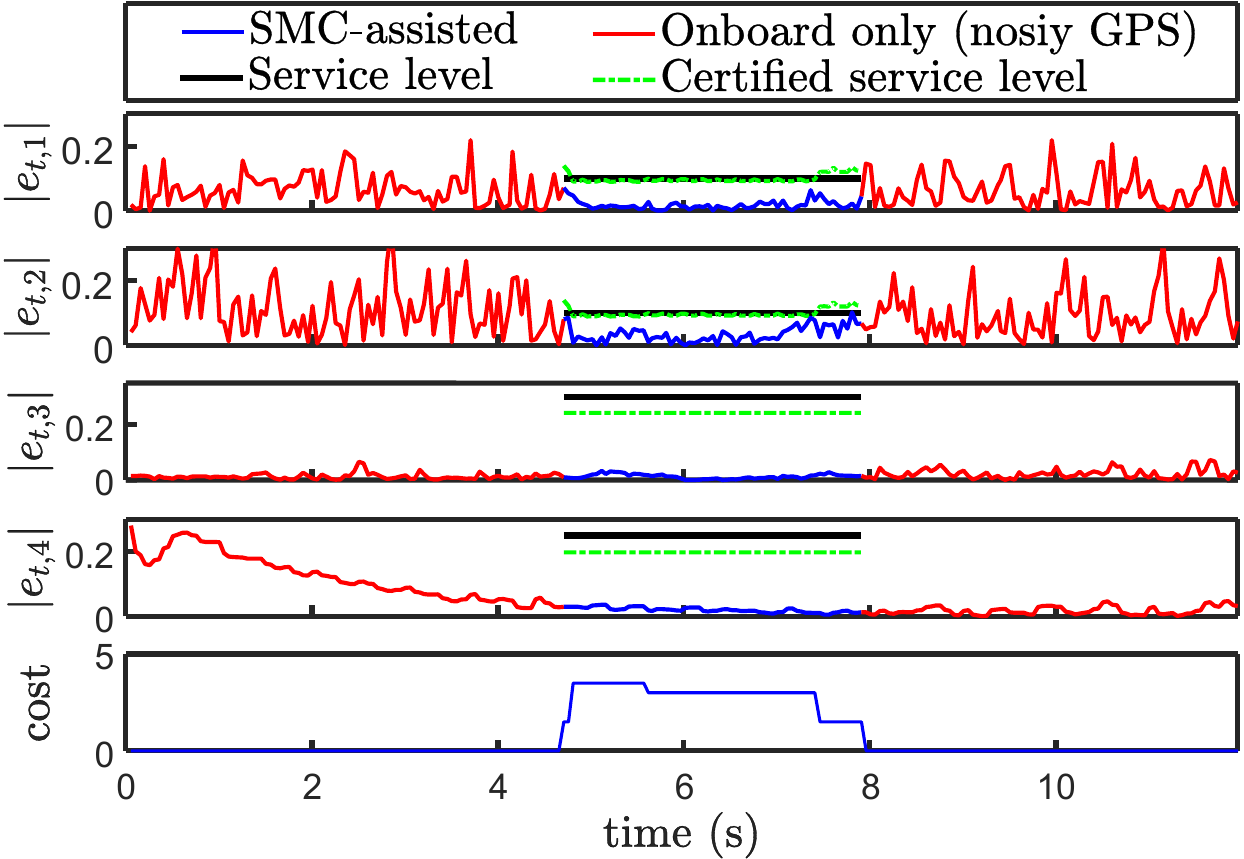}
    \caption{The top four plots show the estimation errors $|e_{t,i}|$. The same color scheme as in Fig.~\ref{fig:Sim_rmse} is used, except the green dashed lines denote the certified service levels $k_i^*$.  The bottom plot shows the cost of selected external sensors.}
    \label{fig:exp_results}
\end{figure}

Fig.~\ref{fig:exp_results} presents the testbed experimental results.
As shown in the top two plots, the estimation errors of $p_t^x$ and $p_t^y$ are large when only onboard sensors are used (red lines). 
Once the SMC-assisted estimation service at service level~$\ell=2$ is activated (blue lines), the estimation errors rapidly decrease and remain within the admissible error bound $\Omega(2)$, computed with $p=0.95$, while the service remains active. The SMC completed both Algorithm~\ref{alg:selection_procedure} and communication of its results within each $50$\,ms control interval. Although our greedy approach does not guarantee the satisfaction of the requested service-level bounds (black lines), the actual estimation errors are often within these bounds. 
In fact, in some intervals, the certified bounds $k_i^*$ (green dashed lines) appear lower than the requested service bound, demonstrating that our greedy algorithm often selects sensors that satisfy the requested bounds.

%=============================================================================================================================
\section{Conclusions}\label{section6: conclusion}
We have proposed a Sensor Management Center framework that performs sensor selection under user-specified service levels. 
We have modeled the estimation-error requirement as a chance constraint with a rectangular error bound and reformulated it, using EKF covariance information, into a multidimensional minimum knapsack problem. Based on this formulation, we have developed a greedy sensor selection algorithm and introduced an a posteriori service-level certification using the updated covariance. MATLAB simulations and 1:15-scale testbed experiments demonstrate that the proposed SMC framework significantly improves estimation accuracy over onboard-only sensing while enabling explicit performance–cost trade-offs. Future work will extend the proposed framework to dynamic, large-scale sensor networks accounting for communication effects, such as latency and packet loss.

\bibliography{ifacconf}             % bib file to produce the bibliography

@article{Vitus2012,
  author    = {Michael P. Vitus and Wei Zhang and Alessandro Abate and Jianghai Hu and Claire J. Tomlin},
  title     = {On Efficient Sensor Scheduling for Linear Dynamical Systems},
  journal   = {Automatica},
  year      = {2012},
  volume    = {48},
  number    = {10},
  pages     = {2482--2493},

}

@article{maity2022sensor,
  title={Sensor scheduling for linear systems: A covariance tracking approach},
  author={Maity, Dipankar and Hartman, David and Baras, John S},
  journal={Automatica},
  year={2022},
  volume={136},
  pages={110078}, 
}

@article{ahn2019moving,
  author={Ahn, Heejin and Danielson, Claus},
  title={Moving Horizon Sensor Selection for Reducing Communication Costs with Applications to Internet of Vehicles},
  journal={2019 American Control Conference (ACC)},
  year={2019},
  pages={1464--1469},
}

@book{anderson2005optimal,
  author={Anderson, Brian DO and Moore, John B},
  title={Optimal filtering},
  publisher={Courier Corporation},
  year={2005},
}

@article{csirik1991heuristics,
  author={Csirik, J{\'a}nos and Frenk, Johannes Bartholomeus Gerardus and Labb{\'e}, Martine and Zhang, Shuzhong},
  title={Heuristics for the 0-1 min-knapsack problem},
  journal={Acta Cybernetica},
  year={1991},
  volume={10},
  number={1-2},
  pages={15--20},
}

@book{pisinger2004knapsack,
  title        = {Knapsack Problems},
  author       = {Hans Kellerer and Ulrich Pferschy and David Pisinger},
  publisher    = {Springer-Verlag Berlin Heidelberg},
  year         = {2004},
}

@article{anderson1981detectability,
  author={Anderson, Brian DO and Moore, John B},
  title={Detectability and stabilizability of time-varying discrete-time linear systems},
  journal={SIAM Journal on Control and Optimization},
  year={1981},
  volume={19},
  number={1},
  pages={20--32},
}

@misc{bae2025miniaturetestbedvalidatingmultiagent,
  author       = {Bae, Hyunchul and Lee, Eunjae and Han, Jehyeop and Kang, Minhee 
                  and Kim, Jaehyeon and Seo, Junggeun and Noh, Minkyun and Ahn, Heejin},
  title        = {Miniature Testbed for Validating Multi-Agent Cooperative Autonomous Driving},
  year         = {2025},
  archivePrefix= {arXiv},
  eprint       = {2511.11022},
  primaryClass = {cs.RO},
  url          = {https://arxiv.org/abs/2511.11022},
  note         = {arXiv preprint}
}

@techreport{snider2009automatic,
  title        = {Automatic Steering Methods for Autonomous Automobile Path Tracking},
  author       = {Snider, Jarrod M.},
  institution  = {Robotics Institute, Carnegie Mellon University},
  year         = {2009}
}

@article{farina2016partition,
  title={Partition-based distributed Kalman filter with plug and play features},
  author={Farina, Marcello and Carli, Ruggero},
  journal={IEEE Transactions on Control of Network Systems},
  volume={5},
  number={1},
  pages={560--570},
  year={2016},
  publisher={IEEE}
}

@article{yang2023sensor,
  title={Sensor selection for remote state estimation with {QoS} requirement constraints},
  author={Yang, Huiwen and Huang, Lingying and Yang, Chao and Mo, Yilin and Shi, Ling},
  journal={Automatica},
  volume={157},
  pages={111241},
  year={2023},
  publisher={Elsevier}
}

@article{wu2020optimal,
  title={Optimal scheduling of multiple sensors over lossy and bandwidth limited channels},
  author={Wu, Shuang and Ding, Kemi and Cheng, Peng and Shi, Ling},
  journal={IEEE Transactions on Control of Network Systems},
  volume={7},
  number={3},
  pages={1188--1200},
  year={2020},
  publisher={IEEE}
}

\end{document}